\newenvironment{moral}{\quote}{\endquote}
\newtheorem{definition}[theorem]{Definition}
\newcommand\RR{\mathbb{R}}
\DeclareMathOperator\im{im}
\crefname{conj}{conjecture}{conjectures}
\crefname{obs}{observation}{observations}
\crefname{prop}{proposition}{propositions}
\crefname{lemma}{lemma}{lemmas}
\crefname{definition}{definition}{definitions}
\title{Square Packing with Asymptotically Smallest Waste Only Needs Good Squares}
\author{}
\author{Hong Duc Bui}
\begin{document}
\thispagestyle{empty}
\maketitle

\begin{abstract}
    We consider the problem of packing a large square with nonoverlapping unit squares.
    Let $W(x)$ be the minimum wasted area when a large square of side length $x$
    is packed with unit squares.
    In Roth and Vaughan's article that proves the lower bound $W(x) \notin o(x^{1/2})$,
    a good square is defined to be a square with inclination at most $10^{-10}$
    with respect to the large square.
    In this article, we prove that in calculating the asymptotic growth of the wasted space,
    it suffices to only consider packings with only good squares.
    This allows the lower bound proof in Roth and Vaughan's article
    to be simplified by not having to handle bad squares.
\end{abstract}

\section{Introduction}

There has been much research on the problem of packing unit squares into a large square.
Define $W(x)$ to be the minimum area of the wasted part when packing unit squares into
a large square of side length $x$.
On one hand, explicit constructions have been claimed in the literature
in \cite{Erdos_1975, Chung_2009, wang2016newresultpackingunit, Chung_2019},
which successively improves the upper bound to
$W(x) \in O(x^{0.637})$, $W(x) \in O(x^{0.631})$, $W(x) \in O(x^{0.625})$, $W(x) \in O(x^{0.6})$ respectively.
On the other hand, \cite{roth1978inefficiency} proves a lower bound
that $W(x) \notin o(x^{0.5})$.

In the proof of the lower bound $W(x) \notin o(x^{0.5})$ in \cite{roth1978inefficiency},
a good square was defined to be one with inclination
$\leq 10^{-10}$ with respect to the large square. The proof of \cite[lemma 4]{roth1978inefficiency}
uses the fact that there isn't too many bad squares.

In this article, we prove that it suffices to only consider the case where there are no bad squares,
simplifying the proof.  Furthermore, if some stronger lower bound is to be proven,
the proof can make use of this result to avoid having to handle bad squares.

The method used in this article can be generalized to the case of packing a large
quadrilateral with all angles almost equal to a right angle.

More formally, let $W^*(x)$ be the smallest wasted area when a large square of side length $x$ is packed with only good unit squares.
\begin{theorem}
    \label{theorem:small_tilt_asymptotic_packing}
    $W(x) \in \Theta(W^*(x))$.
\end{theorem}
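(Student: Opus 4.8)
The first inclusion, $W(x)\in O(W^*(x))$, is immediate with constant $1$: a packing of the $x\times x$ square by good unit squares is in particular a packing by unit squares, so $W(x)$, being a minimum over a larger class, satisfies $W(x)\le W^*(x)$. The whole content is therefore the reverse inclusion $W^*(x)\in O(W(x))$, i.e.\ the task of producing, from an optimal packing $\Pi$ of the $x\times x$ square with waste $W(x)$, a packing using only good squares whose waste is at most a constant multiple of $W(x)$. I would first dispose of the trivial regime: if $W(x)\ge x$, the axis-aligned grid of $\lfloor x\rfloor^2$ unit squares is a packing by good squares (inclination $0$) with waste $x^2-\lfloor x\rfloor^2<2x+1\le 3\,W(x)$ for all large $x$, so $W^*(x)=O(W(x))$ there. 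Hence assume $W(x)<x$, so the optimal packing is genuinely efficient.

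The plan for the main regime is \emph{local surgery}: keep the good squares of $\Pi$ and re-tile only the part of the square occupied by bad squares, using axis-aligned squares, so that the extra waste created is $O(W(x))$. Concretely I would (i) group the bad squares of $\Pi$ into connected \emph{bad clusters} (two bad squares in one cluster if they lie within a bounded distance, say $2$), (ii) for each cluster delete from $\Pi$ the cluster together with every good or bad square meeting a bounded neighbourhood of it, and re-fill the resulting hole with an axis-aligned grid clipped to that hole, and (iii) bound by $O(W(x))$ the total area deleted but not re-covered, summed over all clusters.

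The main obstacle is step (iii), and this is exactly where the geometry — the analogue of the input to Roth and Vaughan's Lemma~4 — has to enter. No purely local charging can work: a single bad square forces no waste in its vicinity, since a $45^\circ$ mini-grid is internally perfect, so one cannot charge $\Omega(1)$ waste to each bad square. The argument must instead be global. A bad cluster of diameter $D$ sitting inside a packing with waste $o(x)$ cannot match its (nearly) axis-aligned surroundings without producing an uncovered ``wedge'' of area of order $\delta_0 D^2$, where $\delta_0=10^{-10}$: a region tiled by squares whose inclination differs from that of the adjacent squares accumulates, along a boundary of length $D$, a seam whose deviation grows linearly in $D$, so once $D$ is large the seam can no longer be absorbed. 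Summing $\sum_i \delta_0 D_i^2 \le W(x)$ over the clusters and combining with $D_i\ge 1$ then bounds $\sum_i D_i$, hence the total re-tiling cost in step (ii)--(iii), by $O(W(x)/\delta_0)=O(W(x))$.

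I expect the care needed to make this ``forced wedge of waste'' estimate rigorous to be the technical heart of the proof: one must handle clusters whose inclination only barely exceeds the threshold (where the inclination gap to the good surroundings degenerates and the wedge argument needs a sharper, possibly two-scale, formulation), choose the re-tiling boxes at the right scale — most likely adaptively to each cluster's diameter rather than uniformly — and bookkeep the boundary squares removed along with each cluster so that deletions in disjoint clusters do not overlap. Everything outside step (iii) is routine, and the same scheme, with ``axis-aligned'' replaced by ``aligned with the large quadrilateral,'' should carry over to the case of a large quadrilateral with all angles close to a right angle, as remarked in the introduction.
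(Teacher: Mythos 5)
Your reduction to the surgery problem and the trivial regime $W(x)\ge x$ are fine, and the overall shape (delete the bad squares, re-tile axis-aligned, charge the re-tiling cost to waste that the bad squares already force) matches the paper. But the quantitative heart of your argument --- that a bad cluster of diameter $D$ forces wasted area $\Omega(\delta_0 D^2)$ in its vicinity --- is a genuine gap, and in fact false as stated. Take $x$ an integer, carve out an axis-aligned $D\times D$ sub-square at integer coordinates, fill it with a $45^\circ$-rotated grid of unit squares, and tile the rest of $S_0$ perfectly: the bad cluster has diameter $\Theta(D)$ but the total waste of the whole packing is only $O(D)$ (a fringe along the sub-square's boundary), so no $\Omega(D^2)$ lower bound can hold. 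Worse, even the corrected linear version $\Omega(D)$ cannot be established \emph{locally}: if every square in the cluster has inclination $10^{-10}+\epsilon$ and the surrounding good squares have inclination $10^{-10}$, the angle differences across the seam are $\epsilon$, which is arbitrarily small, so \cref{lemma:angle-diff} applied near the cluster yields essentially nothing. You flag this "barely bad" case as a technicality to be handled later, but it is the whole difficulty: the waste forced by a bad square is not localized near it at all --- it is only guaranteed to appear somewhere along a chain of squares connecting the bad square to the boundary of $S_0$, where the reference inclination is $0$ and the accumulated angle variation is at least $c=10^{-10}$.

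The paper's proof is built precisely around this. It first generalizes the fundamental lemma to paths of arbitrary length (\cref{angle_diff_over_path}): any path from a bad square to $S_0$ carries $\Omega(c)$ waste in a bounded neighbourhood of its image. It then needs two things simultaneously: many \emph{cell-disjoint} such paths (to conclude $W(\mathcal A)=\Omega(f)$ without double-counting waste) and a cover of all bad cells by rectangles of total perimeter $O(f)$ (so that re-tiling costs $O(f)$). These two quantities are linked by max-flow min-cut on the grid (\cref{claim:exist-rectangle-collection}): the max number of disjoint escape paths equals the min vertex cut, whose $8$-connected components yield the bounding rectangles. Your cluster decomposition replaces this duality with an ad hoc local charging that does not survive the barely-bad case, so as written the proposal does not close.
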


The main technical contribution of this article is
an application of the max-flow min-cut theorem to this problem,
and a generalization of the fundamental lemma in \cite{roth1978inefficiency}
to handle squares far apart.

This article is organized as follows.
In \cref{sec_notation}, we formally define the square packing problem and the various tools we use throughout the article.
In \cref{sec_combinatorial_tools}, we prove a combinatorial tool that, given a set of marked cells on a grid,
bounds the total perimeter of rectangles that contains all these cells
in terms of 
the number of disjoint paths from these marked cells to the boundary on a grid.
In \cref{sec_geometric_tools}, we generalize the fundamental lemma in \cite{roth1978inefficiency}
to handle squares arbitrarily far apart.
In \cref{sec_main_results},
we describe a surgery procedure that cuts away all the bad cells in a packing,
and 
show that the additional wasted area is sufficiently small.
In \cref{sec_discussion},
we discuss some open problems that surface from our work.

\section{Notation}
\label{sec_notation}

\subsection{Notations Pertained to the Geometry of Square Packing}

We assume all squares live in an ambient Euclidean plane $\RR^2$ with a coordinate axis.

Let $S_0$ be an axis-aligned large square of side length $x$.

\begin{definition}[Packing]
    Let $S_0$ be fixed.
    A collection of unit squares $\mathcal A = \{ S_1, S_2, \dots, S_k \}$ is called a packing
    of $S_0$
    if the unit squares $S_i$ are non-overlapping and contained inside $S_0$.
    Define $|\mathcal A|$ to be the number of unit squares, so $|\mathcal A|=k$.
\end{definition}

From now on, $S_1, S_2, \dots$ will be used to denote unit squares in a packing.
The notation $S_i$ is the same as in \cite{roth1978inefficiency}.

\begin{definition}[$S_i$ as a set of points]
    When $S_i$ is used as a set of points on the plane, we implicitly mean the set of points on the boundary of the square.
    Here $i$ may either be $0$ ($S_0$ is the large square), or $i>0$.

    For $i>0$, the set of points on the boundary \emph{and inside} the square $S_i$ is denoted $S_i^*$.
    The set of points on the boundary \emph{and outside} the square $S_0$ is denoted $S_0^*$.
\end{definition}

With this definition, the intersection of $S_i^*$ and $S_j^*$ for any $0 \leq i < j \leq |\mathcal A|$
has area $0$.

\begin{definition}[Wasted area of a packing $W(\mathcal A)$]
    Let $W(\mathcal A) = x^2 - |\mathcal A|$, where as above, $x$ is the side length of $S_0$.
\end{definition}

Let us formally define the function $W(x)$ mentioned in the introduction.
\begin{definition}[The waste function $W(x)$]
    Let $\mathcal A$ be a packing with the largest number of unit squares
    for a fixed value of the side length $x$.
    Then define the minimum wasted area in packing a square of side length $x$ to be $W(x) = W(\mathcal A)$.

    Equivalently, $W(x) = \min_{\mathcal A} W(\mathcal A)$ over all packing $\mathcal A$ of a square $S_0$ with side length $x$.
\end{definition}

\begin{definition}[Wasted area in any shape]
    \label{def_wasted_area_arbitrary_shape}
    Fix the large square $S_0$ and a packing $\mathcal A$.
    Let $A$ be a measurable set on the plane.
    Then define $W(A) = |A \setminus (S_0^* \cup S_1^* \cup S_2^* \cup \cdots \cup S_k^*)|$.
    In words, $W(A)$ is the total area in $A$ that is inside the large square and not covered by any unit squares.
\end{definition}

\begin{definition}[Angle between two squares]
    Let $S_a$ and $S_b$ be squares
    (either an unit square used for packing, or the boundary).
    We say $S_a$ has the same orientation as $S_b$
    if some edge of $S_a$ is parallel to any edge of $S_b$.
    Let $\theta(S_a, S_b)$
    be the minimum rotation angle in radian of one of the squares
    to have the same orientation as the other.
    We allow either clockwise or counterclockwise rotation.
\end{definition}

Note that with our definition, $\theta(S_a, S_b) \leq \frac{\pi}{4}$ for every $a$ and $b$.

\begin{definition}[Distance between two objects]
    Let $p$, $q$ be points in the plane and $S$, $T$ be sets of points in the plane.
    Define $d(p, q)$ to be the distance from $p$ to $q$
    and $d(S, T) = \min_{s \in S, t \in T} d(s, t)$ whenever the minimum exists.
    Define $d(S, p) = d(S, \{ p \})$.
\end{definition}
When both $S$ and $T$ are closed, and at least one of them is compact, then $d(S, T)$ exists.
\begin{definition}[The open ball around an object]
    For a closed set $S$ in the plane, define
    \[ B(S, r) = \{\text{point }p\mid d(S, p) < r\} . \]
\end{definition}
\begin{definition}[Path on a plane]
    We define a path on the plane to be a continuous function
    $\gamma \colon[a, b] \to \RR^2$ for real numbers $a<b$.

    Define the image of the path on the plane to be
    $\im \gamma = \{ \gamma(x) \mid x \in [a, b] \} \subseteq \RR^2$.
\end{definition}

Following \cite{roth1978inefficiency}, we define the constant
$c = 10^{-10}$.

We also define a good square according to \cite{roth1978inefficiency}:
\begin{definition}[Good square and bad square]
    \label{def_good_square}
    A unit square $S_a \in \{ S_1, S_2, \dots \}$ is called good if $\theta(S_0, S_a) \leq c$.
    It is called bad if it is not good.
\end{definition}

\begin{definition}[Good packing, $W^*(x)$ function]
    \label{def_W_star}
    A packing $\mathcal A' = \{ S_1', \dots, S_k' \}$ is called good
    if all of the unit squares $S_1'$, $\dots$, $S_k'$ are good.
    Define $W^*(x) = \min_{\mathcal A'} W(\mathcal A')$
    over all good packings $\mathcal A'$ of the large square $S_0$ with side length $x$.
\end{definition}

The definition implies $W^*(x) \geq W(x)$.

\subsection{Notations Pertained to the Combinatorial Tools}

\begin{definition}[Grid, edge, neighborhood]
    Define a $n \times m$ grid to be a grid consisting of $n \times m$ squares (which we will call cells) glued side-by-side,
    such that there are $n$ rows and $m$ columns.  An \emph{edge} of the grid is defined to be an edge of any cell.

    The \emph{neighborhood} of a cell is defined to be the set of cells that shares an edge with that cell.
    This is also known in the literature as the $4$-neighborhood or von Neumann neighborhood of the cell.
    We also say the two cells to be \emph{adjacent}.

    A cell is said to be on the \emph{boundary} of the grid if it has less than four adjacent cells.
\end{definition}

See \cref{fig:example_grid} for an illustration.  We see that the grid has $(n+1) \times m$ horizontal edges,
$n \times(m+1)$ vertical edges,
and $n \cdot m - \max(0, n-2) \cdot \max(0, m-2)$ boundary cells.

\begin{definition}[Rectangle on a grid]
    Consider a $n \times m$ grid.
    A \emph{rectangle} on the grid is defined to be a tuple of integers $(i, i', j, j')$
    such that $1 \leq i \leq i' \leq n$, $1 \leq j \leq j' \leq n$.
    The cells of that rectangle is defined to be the set of all cells on row $i^*$ and column $j^*$
    for all integers $i \leq i^* \leq i'$ and $j \leq j^* \leq j'$.
    A cell is \emph{contained in} the rectangle if it belongs to the set of cells of the rectangle as above.
    The \emph{cellular perimeter} of the rectangle is defined to be $2 \cdot ((i'-i+1) + (j'-j+1))$;
    equivalently, this is the number of grid edges that is on the boundary of the rectangle.
\end{definition}
See \cref{fig:example_grid_rectangle} for an illustration.

We need to distinguish cellular perimeter and perimeter because later on there may be confusion if the side length of each cell is not exactly $1$ unit distance. Whenever clear, we will just say perimeter.

\begin{definition}[Bounding rectangle]
    For a non-empty set of cells $D$ in a grid,
    define its \emph{bounding rectangle} to be the rectangle with smallest perimeter
    and contains all the cells. It can be proven that this exists and is unique.
\end{definition}

\begin{figure}
    \centering
    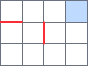
    \caption{Example of a $3 \times 4$ grid. Some edges are colored in red, and a boundary cell is colored in blue.}
    \label{fig:example_grid}
\end{figure}
\begin{figure}
    \centering
    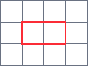
    \caption{Example of a rectangle on a grid. The rectangle is marked in red.}
    \label{fig:example_grid_rectangle}
\end{figure}

\begin{definition}[Path on a grid]
    Let $A$ be a cell in a grid.
    A \emph{path from $A$ to the boundary} is defined to be a sequence of cells
    $(P_1, P_2, \dots, P_k)$ such that $P_1 = A$, $P_k$ is on the boundary of the grid,
    all $P_i$ are different,
    and for each integer $1 \leq i < k$, $P_i$ and $P_{i+1}$ are adjacent.
\end{definition}

\begin{definition}[$8$-connectivity of a set of cells]
    Let $D$ be a set of cells on a grid.
    We say a cell is in the $8$-neighborhood of another (different) cell if
    they shares a vertex or an edge.
    We define a $8$-connected component of $D$ to be an equivalence class
    of the equivalence relation generated by the relation ``is in the $8$-neighborhood of''.
    $D$ is said to be $8$-connected if it only has one $8$-connected component.
\end{definition}

\section{Combinatorial Tools}
\label{sec_combinatorial_tools}

We need some helpers to prove the theorem.

\begin{lemma}
    \label{perimeter_bound}
    Let $D$ be a $8$-connected set of cells on a grid with $k>0$ cells.
    Let $R$ be its bounding rectangle.
    Then the perimeter of $R$ is no more than $4k$.
\end{lemma}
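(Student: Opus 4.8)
The plan is to reduce the desired bound to two independent one-dimensional counting arguments. First I would fix notation: write the bounding rectangle as $R = (i, i', j, j')$, and set $p = i' - i + 1$ and $q = j' - j + 1$, so that the cellular perimeter of $R$ is exactly $2(p + q)$. The claim $2(p+q) \le 4k$ is then equivalent to $p + q \le 2k$, and I would obtain it from the two symmetric estimates $p \le k$ and $q \le k$; that is, the bounding rectangle of an $8$-connected set of $k$ cells has at most $k$ rows and at most $k$ columns.

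To prove $p \le k$, the plan is to show that $D$ meets every one of the rows $i, i+1, \dots, i'$. Rows $i$ and $i'$ must be met, since otherwise one could delete that extreme row from $R$ to obtain a rectangle of strictly smaller perimeter that still contains every cell of $D$, contradicting the minimality defining the bounding rectangle. For an intermediate row $r$ with $i < r < i'$: if $D$ contained no cell in row $r$, then the cells of $D$ lying in rows below $r$ and the cells of $D$ lying in rows above $r$ would be two nonempty families with no $8$-adjacencies between them, because two cells whose row indices differ by at least $2$ can share neither an edge nor a vertex; this would contradict $8$-connectedness of $D$. Hence each of the $p$ rows from $i$ to $i'$ contains at least one cell of $D$, so $k = |D| \ge p$. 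The estimate $q \le k$ follows by the identical argument with rows replaced by columns. Combining the two, the perimeter of $R$ is $2(p+q) \le 2(k+k) = 4k$, as claimed.

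The argument is essentially elementary, so I do not expect a serious obstacle; the single point I would state carefully is the separation claim used for the intermediate rows — namely that an empty row (or column) lying strictly between two occupied ones splits $D$ into two parts with no $8$-adjacency across the gap, which is incompatible with $8$-connectedness. Everything else is bookkeeping with the definitions of bounding rectangle and cellular perimeter, and the bound is tight (a diagonal staircase of $k$ cells achieves $p = q = k$ and perimeter exactly $4k$).
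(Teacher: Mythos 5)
Your proof is correct, and it takes a slightly different route from the paper's. You bound the number of occupied rows and columns separately: every one of the $p$ rows and $q$ columns of the bounding rectangle must contain a cell of $D$ (the extreme ones by minimality of $R$, the intermediate ones because an empty row or column strictly between occupied ones would split $D$ into two nonempty pieces with no $8$-adjacency across the gap), giving $p \le k$ and $q \le k$ and hence perimeter $2(p+q) \le 4k$. The paper instead builds an explicit injection from the $2(p+q)$ boundary edges of $R$ into the multiset of $4k$ cell edges of $D$, by sliding each boundary edge inward until it lands on a cell of $D$; the step guaranteeing that each edge ``lands somewhere'' is exactly your observation that every row and column of $R$ meets $D$, so the two arguments share their only nontrivial ingredient. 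Your version is arguably cleaner in that it avoids having to verify injectivity of the sliding map, while the paper's edge-injection (and the rectangle-merging alternative in \cref{remark_merging_rectangles}) is phrased in a way that foreshadows how the perimeters are later charged against cells in \cref{claim:exist-rectangle-collection}. Your closing remark that a diagonal staircase attains the bound with $p=q=k$ is a nice sanity check that the constant $4$ cannot be improved.
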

\begin{proof}
    Let $E$ be the edges of the grid that forms the boundary of $R$.
    Then the number of edges in $E$ is equal to the perimeter of $R$.

    Let $F$ be the edges of the cells that forms $D$, including duplicates
    by double-counting edges that belong to two cells in $D$.
    We have $|F|= 4k$.

    We will provide a injective mapping from $E$ to $F$ as follows.

    Consider the edges in $E$ that is on top of $R$.
    We gradually move each of them downwards until they hit a cell in $D$.
    Wherever the edge ends up at, it must belong to $F$.
    So we map the original edge in $E$ to this edge.
    See \cref{fig:eight_connected_boundary} for an illustration.

    \begin{figure}
        \centering
        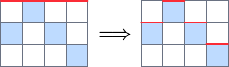
        \caption{Visualization for the mapping in the proof of \cref{perimeter_bound}.}
        \label{fig:eight_connected_boundary}
    \end{figure}
    Perform a similar procedure for the edges in $E$ that is on the other three sides of $R$.
    Because $D$ is $8$-connected, every row and every column has at least one cell in $D$,
    so each edge ends up somewhere instead of moving to infinity.

    It is then easy to see that the constructed mapping is injective.
    Therefore $|E|\leq|F|$, combining with $|F|=4k$ we're done.
\end{proof}

\begin{obs}
    \label{remark_merging_rectangles}
    An alternative way to prove \cref{perimeter_bound} is the following:
    we start with $k$ separate rectangles (each covering a cell in $D$) with perimeter $4$ each,
    then repeatedly merge the rectangles that have at least a point in common,
    noticing that the total perimeter is non-increasing during the process.
    Because of $8$-connectivity, the final result must be the bounding rectangle $R$.
    See \cref{fig:rectangle_merging} for an illustration.

    \begin{figure}
        \centering
        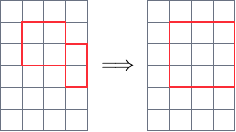
        \caption{Illustration of merging two rectangles that have at least a point in common.
        Here, they have a grid edge in common, and the total perimeter decreases by $2$ after merging.}
        \label{fig:rectangle_merging}
    \end{figure}
\end{obs}

\begin{prop}
    \label{claim:exist-rectangle-collection}
    Let $M$ be a collection of cells on a $n \times m$ grid, which we call the \emph{marked} cells.
    Then there exists an integer $f \geq 0$ such that:
    \begin{itemize}
        \item For each integer $1 \leq i \leq f$, there is a path $P_i = (P_{i, 1}, P_{i, 2}, \dots, P_{i, k_i})$ from a marked cell to the boundary (that is, $P_{i, 1} \in M$ is marked and $P_{i, k_i}$ is on the boundary);
        \item Each cell is used in at most one path. In other words, there is no $(i, j, i', j')$ such that $P_{i, j}= P_{i', j'}$
            but $(i, j) \neq(i', j')$.
        \item There is a collection of rectangles $\{ R_1, \dots, R_k \}$ of the grid such that all marked cells is contained in some rectangle,
            and their total perimeter is $\leq 4f$.
    \end{itemize}
\end{prop}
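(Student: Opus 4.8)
The plan is to apply the max-flow min-cut theorem --- in the guise of Menger's theorem for vertex-disjoint paths --- to convert the quantity ``number of escape paths'' into a ``minimum separating set of cells'', and then to read the rectangles off such a separator directly.

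First I would view the cells of the grid as the vertices of a graph in which two cells are adjacent exactly when each lies in the other's $4$-neighborhood, and take $f$ to be the maximum number of pairwise vertex-disjoint paths from a marked cell to a boundary cell. With this choice, the first two bullet points hold by taking the $f$ disjoint paths that realize the maximum. By Menger's theorem, applied to the two (not necessarily disjoint) vertex sets ``marked cells'' and ``boundary cells'', $f$ equals the minimum size of a set $C$ of cells whose removal leaves no path from a marked cell to a boundary cell; fix such a minimum $C$, so $|C| = f$. (Equivalently, $f$ is the min cut in the usual network with a source feeding all marked cells, a sink fed by all boundary cells, and unit vertex capacities --- this is where max-flow min-cut enters.)

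Next I would build the rectangle collection from $C$. Let $C^{(1)}, \dots, C^{(s)}$ be the $8$-connected components of $C$ and let $R_i$ be the bounding rectangle of $C^{(i)}$. By \cref{perimeter_bound} the perimeter of $R_i$ is at most $4|C^{(i)}|$, and since the $C^{(i)}$ partition $C$ the total perimeter is at most $4|C| = 4f$, as required. It remains to check that every marked cell lies in some $R_i$. A marked cell in $C$ lies in the bounding rectangle of its own $8$-component. If $m$ is a marked cell not in $C$, it lies in a connected component $K$ of the complement of $C$; since $C$ separates the marked cells from the boundary, $K$ contains no boundary cell. Regarding $K$ as a finite set of cells of the infinite grid, let $U$ be the unbounded connected component of its complement, and let $C'$ be the set of cells of $U$ that are $4$-adjacent to $K$. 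Then $C' \subseteq C$, because such a cell lies inside the $n\times m$ grid (being $4$-adjacent to the boundary-free set $K$), is not in $K$, and is $4$-adjacent to the maximal component $K$ of the complement of $C$. Moreover the bounding rectangle of $C'$ strictly encloses $K$ on all four sides, since just outside each of the four extreme rows and columns of $K$ there sits a cell of $C'$; and, crucially, $C'$ is $8$-connected, so it lies within a single $C^{(i)}$. Hence $m$ lies in $K$, which lies in the bounding rectangle of $C'$, which lies in $R_i$.

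The step I expect to be the main obstacle is the assertion that $C'$, the outer $4$-boundary of the enclosed component $K$, is $8$-connected; this is a discrete Jordan-curve phenomenon and is exactly where the interplay between $4$-connectivity (used for the escape paths) and $8$-connectivity (used for the rectangles and for \cref{perimeter_bound}) matters. I would dispatch it either by citing the standard digital-topology fact that the $4$-boundary of a $4$-connected finite set of cells, restricted to a single complementary component, is $8$-connected, or by a short self-contained boundary-tracing argument. A minor technical point to state carefully is that $C$ is permitted to contain marked cells and boundary cells, so the form of Menger's theorem invoked must be the one for two arbitrary, possibly overlapping, vertex sets; this causes no real difficulty. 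One should also dispose of the trivial case $f=0$ (no marked cells, empty rectangle collection) at the outset.
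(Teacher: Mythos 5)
Your proposal is correct and follows essentially the same route as the paper: compute the maximum number of vertex-disjoint marked-to-boundary paths via max-flow with unit vertex capacities (the paper uses explicit vertex splitting rather than citing Menger), take the corresponding minimum vertex cut $C$, split it into $8$-connected components, and bound the perimeters of their bounding rectangles by \cref{perimeter_bound}. In fact your treatment of the containment step --- showing via the $8$-connectedness of the outer $4$-boundary of the enclosed component that every marked cell lands inside one of the rectangles --- is more explicit than the paper's, which asserts this containment with only the remark that $C$ separates the marked cells from the boundary.
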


If we compare this with \cite{roth1978inefficiency},
the arguments in the proof of \cite[Lemma 4]{roth1978inefficiency}
can show that the cells in $M$ appears in either $\geq \sqrt{|M|}$ distinct rows
or $\geq \sqrt{|M|}$ distinct columns,
because otherwise there would be less than $\sqrt{|M|} \cdot \sqrt{|M|}$
locations which marked cells can be at, which is a contradiction.
This implies by taking the trivial vertical or horizontal paths
along the rows or columns containing a marked cell,
we get $f \geq \sqrt{|M|}$ distinct paths.
Our proof is stronger in that it can relate the value of $f$ with the total perimeter
of the bounding rectangles.

\begin{proof}
    We make use of the max-flow min-cut theorem. This is a very well-known theorem,
    one proof can be found in \cite{Ford_1956}.

    Construct the flow graph as follows.

    For each $1 \leq i \leq n$ and $1 \leq j \leq m$,
    construct node $A_{i, j}$ and $B_{i, j}$.
    Each cell on row $i$ and column $j$ corresponds to two nodes $A_{i, j}$ and $B_{i, j}$,
    which we will call the $A$-node and $B$-node of the cell respectively.
    Besides, there are source node $s$ and sink node $t$.

    For each cell, connect its $A$-node to its $B$-node with a (directed) edge with capacity $1$.

    For each cell on the boundary, connect its $B$-node to the sink $t$ with an edge with capacity $\infty$.

    For each marked cell, connect the source $s$ to the cell's $A$-node with an edge with capacity $\infty$.

    For each pair of adjacent cells $(i, j)$ and $(i', j')$,
    connect $B_{i, j}$ to $A_{i', j'}$ with an edge with capacity $\infty$.

    Note that when all edges with capacity $< \infty$ are removed, there is no path from $s$ to $t$,
    therefore the flow is finite.
    Since all edge weights are integers, the flow is an integer, let this be $f$.
    Furthermore, there exists a maximum flow where the amount of flow through each edge is an integer.

    If an algorithm such as Ford-Fulkerson is used to compute the flow, because all edge weights are integral,
    the flow value through each edge is also integral.
    Therefore, the flow can be decomposed into $f$ unit flows from $s$ to $t$, each has weight $1$.
    By taking the cells that corresponds to the edges with capacity $1$ along each path,
    we get a collection of paths $P_i = (P_{i, 1}, \dots, P_{i, k_i})$ for each $1 \leq i \leq f$.
    By construction, this is a path from $P_{i, 1}$ (which is a marked cell) to the boundary.

    Because each edge from a cell's $A$-node to its $B$-node has capacity $1$,
    each cell is used in at most one of the paths $P_i$, and each path only consist of distinct cells.

    From the procedure above, we have constructed a collection of paths $\{P_i\}$ as required.
    Now we need to construct the collection of rectangles $\{R_i \}$.

    By the max-flow min-cut theorem, there is a collection of edges with total capacity $f$
    such that cutting these edges results in no path from the source $s$ to the sink $t$.
    Because each edge with finite capacity has capacity $1$,
    there must be $f$ edges being cut; furthermore each of them connects some cell's $A$-node to its $B$-node.
    Let the set of cells corresponding to these cut edges be $C$.

    Because the edges corresponding to the cells in $C$ is a cut,
    for each marked cell, there is no path from that cell to the boundary that does not contain any cell in $C$.

    Divide $C$ into $8$-connected components, say $D_1$, $\dots$, $D_k$.
    For each $i$, let rectangle $R_i$ be the bounding rectangle of $D_i$.
    Applying \cref{perimeter_bound}, the total perimeter of $R_i$ is no more than $4 \sum_{i=1}^k |D_i| = 4f$.
\end{proof}

\begin{obs}
    This is known as the vertex splitting technique, which is also described
    in \cite{Ford_1963,Even_1975}.
    It is also well-known within competitive programming communities,
    one reference can be found in \cite[Section 8.4.5]{halim2018competitive}.

Constructed as is, the rectangles may have intersections.
The merging procedure described in \cref{remark_merging_rectangles} can be used to create
a collection of disjoint rectangles, but this is not needed in the places
\cref{claim:exist-rectangle-collection} is used.
\end{obs}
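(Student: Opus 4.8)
The plan is to read the observation as two separate assertions and justify each in turn: first that the rectangles produced in the proof of \cref{claim:exist-rectangle-collection} need not be disjoint, and second that the merging procedure of \cref{remark_merging_rectangles} turns them into a pairwise-disjoint family without destroying either guaranteed property (every marked cell is covered, and the total perimeter is at most $4f$).

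For the first assertion I would exhibit a cut set $C$ whose $8$-connected components $D_i$ have overlapping bounding rectangles. The cleanest witness is a ring: let one component $D_1$ be the full boundary ring of a $5 \times 5$ block of cells, which is $8$-connected and whose bounding rectangle equals the whole block, and let a second component $D_2$ be the single central cell. That central cell lies in the $8$-neighborhood of no ring cell, so it forms its own component, yet its bounding rectangle sits strictly inside that of $D_1$. Thus $R_1$ and $R_2$ genuinely intersect even though $D_1$ and $D_2$ are disjoint, which is exactly the phenomenon the observation flags.

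For the second assertion I would make the procedure explicit: while the current collection contains two rectangles sharing at least one point, delete both and insert the bounding rectangle of their union. Each step reduces the number of rectangles by exactly one, so the process halts after at most $k-1$ steps, and it can only halt when the rectangles are pairwise disjoint. Coverage is immediate, since the replacement rectangle contains both deleted rectangles, so the union of the family never shrinks and continues to contain every marked cell.

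The main work — and the step I expect to be the crux — is showing that a single merge does not increase the total perimeter, so the bound $\leq 4f$ survives. I would write the two rectangles as products of intervals, $I_1 \times J_1$ and $I_2 \times J_2$, along the horizontal and vertical axes. A shared point $(x, y)$ forces $x \in I_1 \cap I_2$ and $y \in J_1 \cap J_2$, so both the horizontal projections and the vertical projections overlap. For two overlapping intervals the smallest enclosing interval has length at most the sum of the two lengths, since there is no gap to bridge; hence the width of the merged rectangle is at most the sum of the two widths, and likewise for the heights. Doubling and adding over the two axes shows the merged perimeter is at most the sum of the two original perimeters. This is precisely the per-axis incarnation of the monotonicity already noted in \cref{remark_merging_rectangles}, now applied to an arbitrary overlapping pair rather than to cells of a single $8$-connected set. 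Since each merge is non-increasing and the initial total is $\leq 4f$ by \cref{claim:exist-rectangle-collection}, the final disjoint family also has total perimeter $\leq 4f$, completing the verification.
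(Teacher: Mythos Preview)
The paper treats this as a bare observation with no accompanying proof; it simply records that the vertex-splitting construction is standard, that the resulting rectangles need not be disjoint, and that the merging of \cref{remark_merging_rectangles} could remedy this if one cared. Your write-up therefore supplies strictly more than the paper does, and the substance is correct: the per-axis interval argument showing that merging two touching rectangles cannot increase total perimeter is exactly the mechanism behind \cref{remark_merging_rectangles}, and together with the count-decreases-by-one termination and the monotone-union coverage it fully justifies the second claim.

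One small caveat on your witness for the first claim: the ring-plus-center set shows that the operation ``take bounding rectangles of $8$-connected components'' can produce nested rectangles, but you have not checked that this particular $C$ actually occurs as a \emph{minimum} cut in the flow network for some choice of grid and marked cells (indeed it is not obvious that it does, since an isolated center cell in a min-cut would have to be the sole separator for some marked cell, yet any marked cell inside the ring is already separated by the ring). This does not undermine the point---the observation is only cautioning that disjointness is not guaranteed by the construction, and your example already illustrates why the bounding-rectangle step alone cannot guarantee it---but if you wanted an example that provably arises as a min-cut, a pair of diagonal components with crossing bounding boxes would be easier to realize.
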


\section{Geometric Tools}
\label{sec_geometric_tools}

We make use of the fundamental lemma in \cite{roth1978inefficiency}, which we copy below.
\begin{lemma} \label{lemma:angle-diff}
	Let $S_a$ be a unit square used in the packing, and $S_b$ either the boundary of the shape being packed or another unit square.
    If the distance from $S_a$ to $S_b$ is at most $1$, then there is an open disk of radius $2$ containing $S_a$ such that the area of wasted space in the open disk is at least $c \cdot \theta(S_a, S_b)$.
\end{lemma}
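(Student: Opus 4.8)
The plan is to convert the tilt $\theta:=\theta(S_a,S_b)$ into an explicitly located empty region of area at least $c\theta$ that lies inside a radius-$2$ disk around $S_a$. If $\theta=0$ the statement is vacuous, since any disk of radius $2$ centred in $S_a$ already contains $S_a$ ($\operatorname{diam} S_a=\sqrt2<2$); so assume $\theta>0$. Because rotating the whole plane preserves areas, distances and the wasted region, I may assume $S_a$ is axis-aligned, so that every edge-line of $S_b$ --- or of $\partial S_0$, when $S_b$ is the boundary --- makes angle $\theta$ or $\tfrac\pi2-\theta$ with the axes. Every empty region I exhibit will lie within distance $1$ of both $S_a$ and $S_b$, hence within distance $1+\tfrac{\sqrt2}{2}<2$ of the centre of $S_a$; so the open disk $\mathcal D$ of radius $2$ about that centre contains $S_a$ together with every empty region produced below, settling the radius-$2$ requirement.

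The crux is a \emph{pinched-wedge} construction. Suppose two non-overlapping unit squares (or a unit square and $\partial S_0$) are at distance $\le1$ and tilted by some $\psi>0$, and that an edge $e$ of one and an edge $e'$ of the other make an angle $\psi$ whose vertex --- the intersection $A$ of their supporting lines --- is within a bounded distance of both segments $e$ and $e'$. Then the region $\Omega$ cut off between $e$ and $e'$ on the side of $A$ where neither square lies is a wedge of opening angle $\psi$ (or, in the complementary configuration, of angle $\tfrac\pi2-\psi$, which is the one to use when $\psi$ is near $\tfrac\pi4$); it is disjoint from the interiors of both squares by construction, lies in $S_0$ because by convexity it is contained in the convex hull of $e$, $e'$ and a vertex, and lies in $\mathcal D$. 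The essential claim is that $\Omega$ is \emph{uncoverable} near $A$: any third unit square meeting the part of $\Omega$ within distance $\tfrac12$ of $A$ must avoid the interiors of both squares, hence it must fit, near $A$, inside a wedge of opening angle $<\tfrac\pi2$ whose two walls are segments of length $1$; but such a wedge is narrower than one unit everywhere within distance $1$ of $A$, so no unit square can cross it there, and it cannot reach in from beyond the ends of the walls without crossing it. Thus the wasted area $W(\Omega)$ is at least a constant times $\psi$.

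To obtain such a configuration from the bare hypothesis $d(S_a,S_b)\le1$, I would walk along a segment realizing the distance between $S_a$ and $S_b$. It runs from $S_a$ to $S_b$ and meets the other unit squares of the packing in some order $T_0,T_1,\dots,T_k$, with $T_0,T_k$ the two given squares. If it meets no other square, the pinched-wedge construction applies directly to $S_a$ and $S_b$. Otherwise, at each transition $T_{i-1}\to T_i$ either the gap along the segment is short, so $T_{i-1}$ and $T_i$ are within a small absolute constant and the pinched wedge applied to that pair yields empty area at least a constant times $\theta(T_{i-1},T_i)$ inside $\mathcal D$, or the gap is not short and comparable empty area is harvested from the gap itself. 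These regions sit on disjoint pieces of the segment, so they add up; and since $\theta(\cdot,\cdot)$ is a metric on orientations modulo right angles, $\sum_i\theta(T_{i-1},T_i)\ge\theta(T_0,T_k)=\theta$, so the total empty area in $\mathcal D$ is at least a constant times $\theta$. Absorbing all the accumulated absolute constants into $c=10^{-10}$ --- which is deliberately extravagant --- completes the argument.

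I expect the main obstacle to be making the pinched wedge genuinely intruder-proof: ruling out a third unit square from the narrow tip of $\Omega$ forces one to track the finiteness of the two walls (each of length exactly $1$) and the corner at which an edge of $S_a$ turns, and this is also precisely where the bound is only \emph{linear} in $\theta$ --- when $S_a$ and $S_b$ are a full unit apart a near-aligned unit square can be slotted between them, so no constant lower bound can hold and $c\theta$ is of the right order. What remains is routine case analysis: which of $\theta$ and $\tfrac\pi2-\theta$ is in play, whether the near point is a vertex of $S_a$ or of $S_b$, the case $S_b=\partial S_0$ (where the edge of $\partial S_0$ nearest the contact point and the exterior of $S_0$ take the roles of $e'$ and of the other square's interior), and the bookkeeping in the chain argument.
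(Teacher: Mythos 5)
The paper does not prove this lemma at all: it is Roth and Vaughan's ``fundamental lemma'', quoted verbatim from \cite{roth1978inefficiency} and used as a black box (the sentence preceding it in \cref{sec_geometric_tools} says so explicitly). So there is no in-paper proof to compare against, and your attempt has to be judged on its own.

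Judged that way, it has a genuine gap at its core, the pinched-wedge step. You place the apex $A$ ``within a bounded distance of both segments $e$ and $e'$'' and then harvest area from the wedge near $A$. But when the tilt $\psi$ is small --- which is exactly the regime where the lemma has content, since for $\psi$ bounded away from $0$ one only needs \emph{some} positive constant of wasted area --- the supporting lines of $e$ and $e'$ are nearly parallel and meet at distance of order $1/\sin\psi$ from the segments, far outside the radius-$2$ disk. The region actually available between the two squares is then a long thin sliver of width $O(\psi)$ over the unit length of the edges, not a neighbourhood of a wedge apex, and your ``no unit square fits in a wedge of opening angle $<\pi/2$ near its apex'' argument says nothing about it: a third unit square can slide an edge or poke a corner into such a sliver without violating any angular constraint. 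Showing that the sliver nevertheless cannot be covered --- because any square entering it must itself be tilted relative to one of its two walls, and tracking where the resulting waste ends up --- is precisely the multi-page case analysis in \cite{roth1978inefficiency}; your proposal compresses it to the assertion ``it cannot reach in from beyond the ends of the walls without crossing it,'' which is the statement to be proved rather than a proof. The surrounding scaffolding (reducing to $S_a$ axis-aligned, the chain $T_0,\dots,T_k$ along the distance-realizing segment with the triangle inequality for $\theta$, and harvesting an empty disk of radius $\frac14$ when a gap is long) is sound and in fact mirrors how the present paper proves its own \cref{angle_diff_over_path}, but it reduces the problem to the adjacent-pair case rather than solving it.
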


In short:
\begin{moral}
	Difference in angle leads to wasted space.
\end{moral}

\Cref{lemma:angle-diff} only works for squares that are sufficiently close, however.
We want to generalize it as follows.

\begin{prop}
    \label{angle_diff_over_path}
    Let $S_a$ and $S_b$ be squares,
    $\gamma\colon[0, l] \to \RR^2$ be a path on the plane connecting $S_a$ and $S_b$.
	Then there is a constant $r>0$ and $c'>0$
    such that $W(B(\im \gamma, r)) \geq c' \cdot \theta(S_a, S_b)$.
\end{prop}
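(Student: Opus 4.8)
The plan is to discretize the path $\gamma$ into overlapping pieces that are short enough for \cref{lemma:angle-diff} to apply, and then argue that the \emph{total} angle deficit along these pieces must account for at least $\theta(S_a, S_b)$. Concretely, I would first choose $r$ so that consecutive squares along a suitably chosen sequence are within distance $1$ of each other: pick points $p_0 = \gamma(0), p_1, \dots, p_N = \gamma(l)$ along $\gamma$ with $d(p_{j}, p_{j+1})$ small, and let $r$ be a small constant (say $r = 3$). For each interior point $p_j$ I would like to produce a ``witness square'' $T_j$ — but the path need not pass through any squares of the packing, so instead I would work with a chain of unit squares of the packing together with their covering situation. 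The cleaner route: since the path lies in $S_0$, partition $\im\gamma$ into segments of length $\le \tfrac12$; on each segment either (a) the segment lies in a region that is entirely wasted space, in which case that segment already contributes area on the order of its length to $W(B(\im\gamma, r))$, or (b) it meets some unit square of the packing. Picking one such square $T_j$ from each segment of type (b), and inserting the boundary-type witnesses at the two ends (namely $S_a$ at the start and $S_b$ at the end, treating $\gamma(0)\in S_a$, $\gamma(l)\in S_b$), we obtain a chain $S_a = T_0, T_1, \dots, T_m = S_b$ in which consecutive squares are within distance $1$ (after refining the segment length relative to the unit size).

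The second step is the telescoping inequality
\[
  \theta(S_a, S_b) \;\le\; \sum_{j=0}^{m-1} \theta(T_j, T_{j+1}),
\]
which follows from the triangle inequality for the angle function $\theta$ on the circle $\RR/\tfrac{\pi}{2}\ZZ$ (orientations of squares live on a quarter-circle, and $\theta$ is the induced metric; note the factor-of-$4$ wrap-around does not cause trouble because $\theta \le \tfrac{\pi}{4}$ by the remark after the angle definition, so the chain of orientations can be lifted to an interval of length $< \tfrac{\pi}{4} \cdot (m{+}1)$, and the geodesic bound applies to the endpoints). Then apply \cref{lemma:angle-diff} to each consecutive pair $(T_j, T_{j+1})$ of distance $\le 1$: each yields an open disk $\Delta_j$ of radius $2$ containing $T_j$ with $W(\Delta_j) \ge c\,\theta(T_j, T_{j+1})$. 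Since $T_j$ meets the segment of $\gamma$ it came from, the center of $\Delta_j$ is within bounded distance (at most $2 + \sqrt2$) of $\im\gamma$, so $\Delta_j \subseteq B(\im\gamma, r)$ for $r := 4$. Hence
\[
  W\bigl(B(\im\gamma, r)\bigr) \;\ge\; \max_j W(\Delta_j) \;\ge\; c \cdot \max_j \theta(T_j, T_{j+1}) \;\ge\; \frac{c}{m}\,\theta(S_a, S_b),
\]
and I would then need to remove the dependence of the bound on $m$.

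The main obstacle is exactly that last point: $m$ (the number of path-segments, hence roughly $\ell/r$) grows with the length of the path, so the naive bound $c'/m$ degrades to zero for long paths. There are two ways out, and I would pursue the first. Option one: instead of taking the \emph{max} over disks $\Delta_j$, observe that the disks along $\gamma$ can be chosen with centers spaced $\ge 1$ apart (by a greedy selection / packing argument along $\gamma$), so that each point of the plane lies in boundedly many $\Delta_j$'s; summing $W(\Delta_j)$ then gives $W(B(\im\gamma,r)) \ge \tfrac{1}{K}\sum_j W(\Delta_j) \ge \tfrac{c}{K}\sum_j \theta(T_j,T_{j+1}) \ge \tfrac{c}{K}\theta(S_a,S_b)$ with $K$ an absolute overlap constant — the telescoping then pays off because we keep the \emph{sum}, not the max. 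The one subtlety to check in this approach is that when we thin out the chain to get well-separated disks we must not lose the telescoping property of $\theta$; this is fine because $\theta$ still satisfies the triangle inequality after deleting intermediate squares (deleting terms only increases the right-hand side). Option two, if the overlap bookkeeping is annoying: relocate each $T_j$ so its disk is centered \emph{on} $\gamma$ and invoke that the segments of $\gamma$ themselves are essentially disjoint. Either way, the constants $r$ and $c' = c/K$ are absolute, as required. (Here $\ZZ$ denotes the integers; if the paper has not defined \verb|\ZZ| I would write $\mathbb{Z}$ instead.)
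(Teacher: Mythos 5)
Your proposal follows essentially the same route as the paper's proof: discretize $\gamma$ into points spaced roughly $\tfrac12$ apart, attach a nearby ``witness'' square to each point, telescope $\theta$ over consecutive witnesses, apply \cref{lemma:angle-diff} to each consecutive pair, and then use a bounded-overlap argument (rather than a max) to pass from the sum of wasted areas in the disks to the wasted area of their union inside $B(\im\gamma,r)$. The paper takes spacing $\geq \tfrac12$, balls of radius $5$, and overlap constant $1600$, yielding $r=5$ and $c'=c/1600$; your choice of constants differs but the architecture is identical, including your observation that deleting intermediate witnesses preserves the telescoping inequality.

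The one step that would fail as written is your case (a): from ``the segment lies in a region that is entirely wasted space'' you conclude that it ``contributes area on the order of its length,'' but a segment has measure zero and unit squares of the packing may approach it arbitrarily closely from both sides, so the wasted area in any fixed neighborhood of such a segment can be arbitrarily small. The paper's dichotomy repairs this: either some ball $B(\gamma(x),\tfrac14)$ already has wasted area $\geq c\,\theta(S_a,S_b)$, in which case the proposition holds immediately (that is all one needs, and this ball sits inside $B(\im\gamma,r)$), or else every such ball has wasted area $< c\,\theta(S_a,S_b) < \tfrac{\pi}{16}$, which is strictly less than the area of the ball, so some unit square must meet the ball and can serve as the witness $T_j$. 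Substituting this dichotomy for your case (a)/(b) split makes your argument go through and it then coincides with the paper's proof.
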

Recall $W(-)$ is the wasted area as defined in \cref{def_wasted_area_arbitrary_shape}.

Intuitively, we can imagine drawing a path $\gamma$ from $a$ to $b$,
then the statement says that if there is an angle difference $\theta(S_a, S_b)$ between $S_a$ and $S_b$,
the total wasted area inside the region colored green is at least proportional to $\theta(S_a, S_b)$.
See \cref{fig:intuit_waste_path} for an illustration.

\begin{figure}
    \centering
    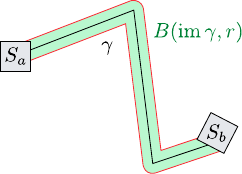
    \caption{Illustration for \cref{angle_diff_over_path}
    (drawing not to scale, $r=5$ in this article).}
    \label{fig:intuit_waste_path}
\end{figure}

\begin{obs}
    The vertical strip $T_i$ in \cite[Lemma 6]{roth1978inefficiency}
    or the vertical line segment $L(X)$ in \cite[Lemma 4]{roth1978inefficiency}
    is very similar to a vertical path
    $\gamma$ as we use here. Our proof can be seen as a generalization of the argument there.
\end{obs}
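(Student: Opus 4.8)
The plan is to substantiate the comparison by exhibiting Roth and Vaughan's vertical-strip construction as the special case of \cref{angle_diff_over_path} in which the connecting path $\gamma$ is a straight vertical segment. First I would recall the role of the segment $L(X)$ in \cite[Lemma 4]{roth1978inefficiency} and of the strip $T_i$ in \cite[Lemma 6]{roth1978inefficiency}: Roth fixes a horizontal coordinate $X$, draws the vertical line segment $L(X)$ across the packing, and follows the sequence of unit squares that this segment meets. Consecutive squares along the segment are separated only by wasted gaps, so they lie within distance $1$ of one another, and the fundamental lemma \cref{lemma:angle-diff} may be applied between successive squares; summing the resulting wasted-area contributions yields Roth's estimate.

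Next I would note that a vertical segment is literally an instance of a path $\gamma\colon[0,l]\to\RR^2$, namely $\gamma(u)=(X,\,y_0-u)$, so that $L(X)$ is recovered by this particular choice of $\gamma$ and $\im\gamma$ is exactly the segment Roth draws. Under this choice the tube $B(\im\gamma,r)$ is the open vertical strip of half-width $r$ centred on $L(X)$, matching Roth's strip $T_i$ up to the value of the constant $r$ (here $r=5$). Thus both of Roth's objects arise by specializing the path-neighborhood $B(\im\gamma,r)$ to a vertical path.

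I would then check that specializing \cref{angle_diff_over_path} to this $\gamma$ reproduces---and in fact strengthens---Roth's conclusion. The proposition lower-bounds $W(B(\im\gamma,r))$ by $c'\cdot\theta(S_a,S_b)$ for the two endpoint squares; in Roth's setting $S_a$ and $S_b$ are the extreme squares touched by $L(X)$, and $\theta(S_a,S_b)$ plays the role of the accumulated inclination along the segment. The essential point making this a genuine generalization is that Roth's argument requires consecutive squares to be within distance $1$, whereas the path formulation only requires that some path connect $S_a$ to $S_b$; this is precisely what allows the bound to survive when the two squares are arbitrarily far apart, a situation not reachable by the straight-segment construction of \cite{roth1978inefficiency}.

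The main thing to articulate carefully---more a matter of bookkeeping than a genuine obstacle---is the correspondence between Roth's chain of pairwise-nearby squares and the single neighborhood $B(\im\gamma,r)$ of one path. In Roth's proof the wasted area is harvested disk-by-disk between adjacent squares through \cref{lemma:angle-diff}, and one must verify that all of these radius-$2$ disks lie inside the strip $B(\im\gamma,r)$, so that the per-step contributions aggregate into a single lower bound on $W(B(\im\gamma,r))$; taking $r$ at least as large as the disk radius of the fundamental lemma makes this containment immediate. With this in hand, the vertical-strip estimate of \cite{roth1978inefficiency} is exactly the vertical-path case of \cref{angle_diff_over_path}, which is what the observation asserts.
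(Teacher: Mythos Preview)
This statement is an \emph{observation} (a remark), and the paper provides no proof for it beyond the two sentences you see; there is nothing in the paper to compare your proposal against. Your write-up is a correct and reasonable expansion of what the remark is gesturing at: a vertical segment is a special case of a path $\gamma$, and the tube $B(\im\gamma,r)$ then specializes to a vertical strip of the kind Roth and Vaughan use.

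One small correction to your framing: you describe the generalization as allowing $S_a$ and $S_b$ to be ``arbitrarily far apart, a situation not reachable by the straight-segment construction.'' That is not quite the point. Roth and Vaughan's vertical segment $L(X)$ already connects squares that may be far apart; what makes their argument work is that \emph{consecutive} squares along the segment are within distance $1$, so \cref{lemma:angle-diff} applies pairwise. The genuine generalization in \cref{angle_diff_over_path} is that $\gamma$ may be an \emph{arbitrary} continuous path rather than a straight vertical segment, and that the proof handles the possibility that some point $\gamma(x)$ lies far from every square (via the case split on $W(B(\gamma(x),\tfrac14))$). The distance between the endpoint squares was never the obstruction.
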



The main idea is the following.
\Cref{lemma:angle-diff} gives us \emph{circles} that contains large wasted area.
Then we find a sequence of squares $(S_{j_0} = S_a, S_{j_1}, S_{j_2}, \dots, S_{j_{k-1}}, S_{j_k} = S_b)$ along and near the path,
apply \cref{lemma:angle-diff} on each consecutive pair $(S_{j_i}, S_{j_{i+1}})$,
and get a collection of circles whose center is roughly along the path $\gamma$.
See \cref{fig:intuit_waste_path_proof} for an illustration.

\begin{figure}
    \centering
    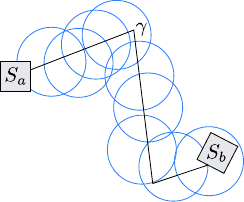
    \caption{Illustration for the construction of circles.}
    \label{fig:intuit_waste_path_proof}
\end{figure}

To prove the total wasted area in the union of the circles is sufficiently large,
we also need that each point is contained in $O(1)$ circles.

\begin{proof}
    If there exists any $0 \leq x \leq l$ such that $W(B(\gamma(x), \frac{1}{4})) \geq c \cdot \theta(S_a, S_b)$,
    we are done.

    Suppose otherwise. Construct a sequence of real numbers $\{ x_i \}$ as follows:
    let $x_1 = 0$, and for each integer $i \geq 2$, $x_i$ is defined to be the largest value of $x$
    such that $x_i \geq x_{i-1}$ and $d(\gamma(x_{i-1}), \gamma(x_i)) \leq \frac{1}{2}$.
    Because the domain of $\gamma$ is compact and $\gamma$ is continuous,
    this exists.

    Let $n$ be the smallest index such that $x_n = l$, if such index exists.
    Otherwise let $n = \infty$.

    For all $1 \leq j<k<n$,
    $d(x_j, x_k) \geq \frac{1}{2}$.
    Since $\im \gamma$ is compact, it is bounded.
    We see $n<\infty$
    by an application of the pigeonhole principle:
    say $\im \gamma$ is inside a square of side length $R$,
    divide it into $\lceil 4R \rceil^2$ small squares of side length $\leq \frac{1}{4}$,
    then no two points $(x_j, x_k)$ for $1 \leq j<k<n$ can be in the same square,
    therefore $n \leq \lceil 4R \rceil^2+1$.

    For each $1 \leq j \leq n$,
    define index $i_j \geq 0$ such that
    $d(\gamma(x_j), S_{i_j}^*) < \frac{1}{4}$.
    If there are multiple such indices, pick any;
    with the exception of $i_1 = a$ and $i_n = b$.

    We show such $i_j$ exists:
    Since we assume for all $0 \leq x \leq l$ then $W(B(\gamma(x), \frac{1}{4})) < c \cdot \theta(S_a, S_b)
    < \frac{1}{4}^2 \pi$,
    there is at least one square with nonzero intersection area with $B(\gamma(x), \frac{1}{4})$.

    Next, for each $1 \leq j<n$,
    \begin{align*}
    d(S_{i_j}^*, S_{i_{j+1}}^*)
    &\leq d(S_{i_j}^*, \gamma(x_j))
    + d(\gamma(x_j), \gamma(x_{j+1}))
    \\ & \qquad + d(\gamma(x_{j+1}), S_{i_{j+1}}^*)
    \\ &< \frac{1}{4} + \frac{1}{2} + \frac{1}{4} 
    \\ &= 1.
    \end{align*}
    Therefore $d(S_{i_j}, S_{i_{j+1}}) < 1$.

    If $\theta(S_{i_j}, S_{i_{j+1}}) = 0$
    then $W(B(\gamma(x_j), 5)) \geq 0 = c \cdot \theta(S_{i_j}, S_{i_{j+1}})$.
    Otherwise, at least one of
    $S_{i_j}$ or $S_{i_{j+1}}$ is an unit square,
    apply \cref{lemma:angle-diff},
    there is an open disk with radius $2$ containing either $S_{i_j}$ or $S_{i_{j+1}}$
    and have wasted area $\geq c \cdot \theta(S_{i_j}, S_{i_{j+1}})$.
    Let $p$ be the center of this disk, so the disk is $B(p, 2)$, and let $k \in \{ i_j, i_{j+1} \}$ be the index
    such that $S_k \subseteq B(p, 2)$.
    Then
    $d(\gamma(x_j), S_k) < \frac{3}{4}$,
    so $d(\gamma(x_j), p) \leq 2+\frac{3}{4}$,
    so $B(p, 2) \subseteq B(\gamma(x_j), 5)$,
    therefore $W(B(\gamma(x_j), 5)) \geq c \cdot \theta(S_{i_j}, S_{i_{j+1}})$.

    Taking the sum, we get
    \begin{align*}
        \sum_{j=1}^{n-1} W(B(\gamma(x_j), 5))
        & \geq c \cdot \sum_{j=1}^{n-1} \theta(S_{i_j}, S_{i_{j+1}}) \\
        & \geq c \cdot \theta(S_{i_1}, S_{i_n})
        = c \cdot \theta(S_a, S_b) .
    \end{align*}

    Since for each $1 \leq j<k<n$,
    $d(x_j, x_k) \geq \frac{1}{2}$,
    for each point $p$ on the plane,
    there can be at most $1600$ such points $x_j$
    in $B(p, 5)$,
    therefore
    \[
        W\Big(\bigcup_{j=1}^{n-1} B(\gamma(x_j), 5)\Big)
        \geq \frac{c}{1600} \cdot \theta(S_a, S_b) .
    \]

    This gives the desired conclusion.
\end{proof}

\section{Main Results}
\label{sec_main_results}

Now we return to the square packing problem.
Recall $x$ is the side length of the large square $S_0$.

\begin{prop}
    \label{reduce-to-good-packing}
    There exists a constant $c_2 > 1$ independent of $x$ such that:
    for all packing $\mathcal A$, there exists a good packing $\mathcal A'$
    such that $W(\mathcal A') \leq c_2 W(\mathcal A)$.
\end{prop}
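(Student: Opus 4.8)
The plan is to overlay an axis-aligned grid of unit cells on $S_0$ (positioned so that $S_0$ lies in the interior of the gridded region), let $M$ be the set of cells meeting at least one bad square of $\mathcal A$, and apply \Cref{claim:exist-rectangle-collection} to the marked set $M$. This produces a single integer $f$ which will control everything: $f$ vertex-disjoint cell-paths $P_1,\dots,P_f$ from cells of $M$ to the grid boundary, together with a collection of rectangles covering $M$ whose total perimeter is at most $4f$. If $f=0$ then $M=\varnothing$, i.e.\ $\mathcal A$ is already good and we take $\mathcal A'=\mathcal A$; so assume $f\ge 1$. Merging overlapping rectangles as in \Cref{remark_merging_rectangles} (total perimeter non-increasing), we may assume the rectangles $R_1,\dots,R_{k'}$ are pairwise disjoint, still cover $M$, and still have total perimeter $\le 4f$ (hence also $k'\le f$).

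The first half of the argument establishes $W(\mathcal A)\ge \Omega(f)$. For each $i$, pick a bad square $S_{a_i}$ met by the first cell of $P_i$, and let $\gamma_i$ be the polyline through the centres of the cells of $P_i$, with its first segment shortened to begin at a point of $S_{a_i}$ and its last segment truncated where it first meets $\partial S_0$; then $\gamma_i$ connects $S_{a_i}$ to $S_0$ and $\im\gamma_i$ stays within bounded distance of the cells of $P_i$. By \Cref{angle_diff_over_path} and \Cref{def_good_square}, $W(B(\im\gamma_i,r))\ge c'\,\theta(S_{a_i},S_0)\ge c'c$, where $r,c'$ are the absolute constants of \Cref{angle_diff_over_path}. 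Because the cell sets of the $P_i$ are pairwise disjoint and $r$ is a fixed constant, every point of the plane lies in only $O(1)$ of the tubular neighbourhoods $B(\im\gamma_i,r)$, so
\[
 f\,c'c\;\le\;\sum_{i=1}^{f} W\big(B(\im\gamma_i,r)\big)\;\le\;O(1)\cdot W\Big(\textstyle\bigcup_i B(\im\gamma_i,r)\Big)\;\le\;O(1)\cdot W(\mathcal A),
\]
the last step because the total wasted area of the plane equals $W(\mathcal A)$ by \Cref{def_wasted_area_arbitrary_shape}. Hence $f\le C_1 W(\mathcal A)$ for an absolute constant $C_1$.

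The second half is the surgery. Delete from $\mathcal A$ every unit square meeting $\bigcup_j (R_j\cap S_0)$, and fill each $R_j\cap S_0$ with a maximal grid of axis-aligned unit squares. A surviving square of $\mathcal A$ avoids $\bigcup_j R_j\supseteq\bigcup M$, hence meets no cell of $M$, hence is good; the inserted squares are axis-aligned, hence good; and since the $R_j$ are disjoint and everything meeting them was removed, nothing overlaps, so $\mathcal A'$ is a good packing. For the count: the deleted squares not contained in $\bigcup_j R_j$ each straddle the boundary of some $R_j$, so number $O(\sum_j \mathrm{perim}(R_j)+k')=O(f)$; the deleted squares contained in $\bigcup_j(R_j\cap S_0)$ number at most $\mathrm{area}(\bigcup_j(R_j\cap S_0))$; and the inserted squares number at least $\mathrm{area}(\bigcup_j(R_j\cap S_0))-O(\sum_j\mathrm{perim}(R_j))$, since tiling an axis-aligned rectangle with $\lfloor\text{width}\rfloor\cdot\lfloor\text{height}\rfloor$ unit squares wastes only $O(\text{perimeter})$. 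Therefore $|\mathcal A'|\ge|\mathcal A|-O(f)$, i.e.\ $W(\mathcal A')\le W(\mathcal A)+C_2 f$ for an absolute constant $C_2$. Combining with the first half, $W(\mathcal A')\le(1+C_1C_2)\,W(\mathcal A)$, so $c_2:=1+C_1C_2$ works.

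I expect the main obstacle to be the interplay of the two roles of $f$: the rectangles of \Cref{claim:exist-rectangle-collection} must be \emph{re-tiled} rather than merely vacated — this is precisely what keeps the surgery cost proportional to their total perimeter, hence to $f$ — while at the same time the disjoint escape paths make $f$ at most a constant multiple of $W(\mathcal A)$ through \Cref{angle_diff_over_path}. Everything else is careful bookkeeping: handling rectangles that protrude from $S_0$ (so $R_j\cap S_0$ may have a non-integer side), verifying the $O(1)$ overlap of the neighbourhoods $B(\im\gamma_i,r)$, and bounding the number of packed squares that straddle a rectangle boundary.
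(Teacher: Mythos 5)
Your proposal is correct and follows essentially the same route as the paper: apply \cref{claim:exist-rectangle-collection} to the cells meeting bad squares, use the disjoint escape paths together with \cref{angle_diff_over_path} to get $f = O(W(\mathcal A))$, and re-tile the covering rectangles so the surgery costs only $O(f)$ extra waste. The only differences are cosmetic (unit cells with padding versus the paper's $\lceil x\rceil\times\lceil x\rceil$ subdivision of $S_0$ into cells of side in $(\tfrac12,1]$, and deleting squares \emph{meeting} the rectangles rather than those contained in them), and your bookkeeping for these variants is sound.
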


\begin{proof}
    If $x \leq 1$, the statement is obvious. We assume $x>1$.

    Divide the large square into $\lceil x \rceil$ rows and that many columns,
    we get a $\lceil x \rceil \times \lceil x \rceil$ grid,
    each cell is a square with side length $>\frac{1}{2}$ and $\leq 1$.

    Let the set of marked cells $M$ be the set of cells that has any (positive area) overlap with a bad square.
    Apply \cref{claim:exist-rectangle-collection} on this set $M$,
    we get an integer $f \geq 0$, and a collection of paths $\{ P_i \}_{1 \leq i \leq f}$,
    where $P_i = (P_{i, 1}, \dots, P_{i, k_i})$.
    Let $O_{i, j}$ be the center of cell $P_{i, j}$,
    and let path $\gamma_i$ be the polyline consisting of the shortest segment from
    any bad square that has an overlap with cell $P_{i, 1}$
    to $O_{i, 1}$, followed by the polyline $O_{i, 1} O_{i, 2} O_{i, 3} \dots O_{i, k_i}$,
    followed by the shortest segment from $O_{i, k_i}$ to the boundary of $S_0$.

Apply \cref{angle_diff_over_path} on each of the path $\gamma_i$,
we get (where $c'$ and $r$ are as in \cref{angle_diff_over_path})
\[
    \sum_{i=1}^f W(B(\im \gamma_i, r)) \geq c' \cdot c \cdot f.
\]
Because each cell $P_{i, j}$ is only used once and the side length of each cell is between $\frac{1}{2}$ and $1$,
each point in $\bigcup_{i=1}^f B(\im \gamma_i, r)$ is covered by $\Theta(1)$ of the $B(\im \gamma_i, r)$,
therefore
\begin{align*}
    W(\mathcal A)
        & \geq W\Big(\bigcup_{i=1}^f B(\im \gamma_i, r) \Big) \\
        & \geq \frac{1}{\Theta(1)} \sum_{i=1}^f W(B(\im \gamma_i, r)) \\
        & \geq \frac{c' \cdot c \cdot f}{\Theta(1)} .
\end{align*}

    From the application of \cref{claim:exist-rectangle-collection} above,
    we also get a collection of rectangles $\{R_1, \dots, R_k\}$ that contains all the marked cells.
    Use these rectangles, modify $\mathcal A$ into $\mathcal A'$ as follows.
    First, delete all unit squares completely contained in $\bigcup_i R_i$,
    this way all bad squares are deleted.
    Then, pack as many axis-aligned unit squares with integral vertex coordinates as possible.

    Note that this packing procedure can only create wasted space along the perimeter of $R_i$.

    Because the side length of each cell is between $\frac{1}{2}$ and $1$,
    for each rectangle $R_i$, its perimeter is between $\frac{1}{2}$ and $1$ times its cellular perimeter.
    We know from \cref{claim:exist-rectangle-collection} that the total cellular perimeter of $R_i$ is $\in O(f)$,
    so the total perimeter of $R_i$ is also $\in O(f)$.

    Therefore the additional wasted area introduced by the modification procedure is $\in O(f)$,
    so $W(\mathcal A') \leq W(\mathcal A) + O(f)$.
    Since $W(\mathcal A) \in \Omega(f)$,
    we get the desired result.
\end{proof}

Now we can prove \cref{theorem:small_tilt_asymptotic_packing}.

\begin{proof}
    As mentioned after \cref{def_W_star}, $W(x) \leq W^*(x)$.
\Cref{reduce-to-good-packing} implies $W^*(x) \leq c_2 W(x)$.
\end{proof}

\section{Discussion}
\label{sec_discussion}

\subsection{Generalization of the Wasted Area Along Path Proposition}

How small can $r$ be in \cref{angle_diff_over_path}?
We guess that $r$ can be made arbitrarily small.

\begin{conj}
    Let $S_a$ and $S_b$ be squares.
    Let $\gamma\colon[0, l] \to \RR^2$ be a path on the plane connecting $S_a$ and $S_b$.
	Then there is a constant $c'>0$ such that
    for all $r>0$,
    $W(B(\im \gamma, r)) \geq c' \cdot \min(1, r^2) \cdot \theta(S_a, S_b)$.
\end{conj}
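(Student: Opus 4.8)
The plan is to run the argument behind \cref{angle_diff_over_path} at scale $r$ instead of at the fixed scale used there. Let $r_0 = 5$ be the radius produced in its proof. For $r \ge r_0$ the conjecture is immediate: $\min(1, r^2) = 1$ and $B(\im \gamma, r) \supseteq B(\im \gamma, r_0)$, so $W(B(\im \gamma, r)) \ge W(B(\im \gamma, r_0)) \ge c' \theta(S_a, S_b)$ by \cref{angle_diff_over_path}. Hence the whole content lies in the range $r < r_0$, and in fact only $r < 1$ requires genuinely new input (see below).

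For small $r$ I would copy the proof of \cref{angle_diff_over_path} almost verbatim, replacing every length of order $1$ by the corresponding length of order $r$. If some point $\gamma(x)$ has no unit square within distance $r/4$, then $B(\gamma(x), r/4)$ is entirely wasted, and since $\theta(S_a, S_b) \le \pi/4$ this already gives $W(B(\im \gamma, r)) \ge \pi r^2 / 16 \ge c' r^2 \theta(S_a, S_b)$. Otherwise, pick sample points $x_1 = 0 < x_2 < \dots < x_n = l$ with $x_{j+1}$ maximal subject to $d(\gamma(x_j), \gamma(x_{j+1})) \le r/2$ (finitely many, by the same pigeonhole argument on the compact set $\im \gamma$), choose for each $j$ a unit square $S_{i_j}$ with $d(\gamma(x_j), S_{i_j}) < r/4$, and set $i_1 = a$, $i_n = b$. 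Then $d(S_{i_j}, S_{i_{j+1}}) \le r/4 + r/2 + r/4 = r$, so a \emph{scale-$r$ version of \cref{lemma:angle-diff}} (stated in the next paragraph) yields a disk of radius $O(r)$ inside $B(\gamma(x_j), O(r))$ whose wasted area is at least $c_0 r^2 \theta(S_{i_j}, S_{i_{j+1}})$. Summing over $j$, using $\sum_j \theta(S_{i_j}, S_{i_{j+1}}) \ge \theta(S_{i_1}, S_{i_n}) = \theta(S_a, S_b)$, and noting that the $(r/2)$-separated centres $\gamma(x_j)$ cause each point to lie in only $O(1)$ of these disks, gives $W(B(\im \gamma, O(r))) \ge c_1 r^2 \theta(S_a, S_b)$; rescaling $r$ by a constant factor turns this into $W(B(\im \gamma, r)) \ge c' r^2 \theta(S_a, S_b)$ for all $r$ below a constant, which together with the case $r \ge r_0$ proves the conjecture.

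The missing ingredient --- and the crux --- is a scale-$r$ analogue of \cref{lemma:angle-diff}: there are constants $C \ge 1$ and $c_0 > 0$ such that whenever $S$ is a unit square of the packing and $S'$ is another unit square or the boundary with $d(S, S') \le r \le 1$, some disk of radius $Cr$ meeting $S$ contains wasted area at least $c_0 r^2 \theta(S, S')$. (For $r \ge 1$ this follows from \cref{angle_diff_over_path} applied to the segment between the nearest points of $S$ and $S'$, so only $r < 1$ is new.) The reason to believe it: inside a disk $B(p, Cr)$ of diameter less than $1$ centred near the nearest pair of points of $S$ and $S'$, every unit square shows up only as a half-plane or a right-angle corner and the global packing constraints are invisible; near that nearest pair, $S$ and $S'$ present edges whose supporting lines differ in direction by at least $\theta(S, S')$ (using $\theta(S, S') \le \pi/4$, so that even the ``perpendicular'' pairing of edges differs by $\pi/2 - \theta(S, S') \ge \theta(S, S')$), and these lines bound a wedge, outside both squares, of opening angle $\ge \theta(S, S')$ and radius $\Theta(r)$, hence of area $\Theta(r^2 \theta(S, S'))$. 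The difficulty --- exactly the one Roth and Vaughan face in proving their fundamental lemma --- is that this wedge need not be wasted, because corners of other unit squares can reach into it; but such a corner has opening angle $\pi/2 > \pi/4 \ge \theta(S, S')$, so it cannot sit inside the wedge without protruding, and where it protrudes the offending square is forced to be tilted strictly between $S$ and $S'$, creating fresh wedges of its own. Turning this into a proof needs a counting or potential argument showing that a fixed fraction of the original wedge survives all such subdivisions. I expect Roth and Vaughan's local estimate to rescale so as to provide exactly this, but carrying that out rigorously is the main obstacle and is presumably why the statement is only conjectured here.
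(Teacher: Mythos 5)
This statement is stated in the paper only as a conjecture; the paper offers no proof of it, and in fact remarks immediately afterwards that proving it ``would require proving a stronger version of \cref{lemma:angle-diff}.'' Your proposal reduces the conjecture to exactly that missing ingredient, so your diagnosis of where the difficulty lies agrees with the paper's. Your outer scaffolding --- handling $r \ge r_0$ by monotonicity, and for small $r$ rerunning the sampling argument of \cref{angle_diff_over_path} with spacing $r/2$, telescoping the angles, and using the $(r/2)$-separation of the sample points to control multiplicity --- is sound, with one small repair needed: when you pick a nearby object $S_{i_j}$ for each sample point you must allow it to be the boundary $S_0$ as well as a unit square (as the paper does with $i_j \ge 0$), since otherwise the ``entirely wasted ball'' case is wrong for points near the edge of $S_0$, where part of $B(\gamma(x), r/4)$ lies outside the large square and contributes nothing to $W$.

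The genuine gap is the scale-$r$ fundamental lemma itself, and it does not follow by ``rescaling'' Roth and Vaughan's argument, because the configuration does not rescale: the squares of the packing are unit squares at every scale, so when you zoom in to a disk of radius $Cr \ll 1$ the potential intruders into your wedge are still unit-sized edges and corners, and Roth and Vaughan's accounting of how much of the wedge survives such intrusions is carried out at unit scale with constants tied to that scale. Whether a fixed fraction of a radius-$\Theta(r)$ wedge of opening $\theta$ survives --- uniformly in $r$ as $r \to 0$ --- is a new local estimate, not a corollary of the existing lemma. Your heuristic (an intruding right-angle corner cannot fit inside a wedge of opening $\le \pi/4$ without protruding and creating fresh angle discrepancies) is plausibly the right mechanism, but as you yourself note, turning it into a counting or potential argument is the entire content of the conjecture. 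So the proposal is an honest and well-aimed reduction, not a proof; the statement remains open in the paper for precisely the reason you identify.
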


To prove this would require proving a stronger version of \cref{lemma:angle-diff}, however.

\subsection{Strengthening of the Result}

We have shown that only good squares (i.e. squares with tilt bounded by a constant) need to be considered
to compute the asymptotic growth of the wasted area.
Nonetheless, the tilt in the proposed packing methods \cite{Erdos_1975, Chung_2009, wang2016newresultpackingunit}
is in fact much smaller, $O(x^{-\epsilon})$ for some $\epsilon>0$.
Can the bound on the tilt be sharper,
for example $O(1/\log x)$ or even $O(x^{-\epsilon})$ for some $\epsilon>0$?

\subsection{Other Algorithmic Considerations}

The construction in \cref{claim:exist-rectangle-collection} can be explicitly implemented.
In doing so, notice that the amount of flow through each edge is either $0$ or $1$,
therefore the maximum flow can also be found if the edges with capacity $\infty$
are set to have capacity $1$ instead.
If the Dinic algorithm is used to compute the flow, the constructed flow graph has $O(n m)$ vertices and $O(n m)$ edges,
therefore the time complexity is $O((n m)^{3/2})$ according to \cite[Theorem 1]{Even_1975}.
Alternatively, because the flow value is $O(n+m)$, the simpler Edmonds--Karp algorithm
\cite{Edmonds_1972} or any implementation of the Ford--Fulkerson method \cite{Ford_1956}
would have time complexity $O(n m (n+m))$.

A simpler algorithm to find disjoint paths in \cref{claim:exist-rectangle-collection}
is to perform multi-source BFS from marked cells to the boundary, greedily pick the shortest
path found each time, while avoiding cells that already belong to some path.
Let $f'$ be the number of disjoint paths found by this method, then $f' \leq f$.
While this method does not provide any guarantee of optimality
(it may happen that
there is no collection of rectangles with total perimeter $\leq 4f'$ covering all the marked cells,
see \cref{sec_test_kill_multisource_bfs}),
numerical experiments suggests the conjecture $f' \in \Omega(f)$.
We find this likely because of the special structure of the grid.

In implementing the merging procedure described in \cref{remark_merging_rectangles},
because the perimeter of each rectangle is at most $2(n+m)$,
and the perimeter increases by at least $2$ after a merge step with another rectangle
that is not completely contained inside,
the merging procedure terminates after at most $n+m$ iterations.
Each iteration can be implemented in $O(n m)$, therefore the total time complexity is
$O(n m (n+m))$.

It would be interesting to investigate whether there is any algorithm taking less than cubic
time for each of the problems above when $n \in \Omega(m)$.

\section{Conclusion}
\label{sec_conclusion}

We prove a relation between arbitrary packing and good packing of a large square,
therefore anyone proving a lower bound on $W(x)$
in the future similar to \cite{roth1978inefficiency}
only need to consider good packings.

\section*{Acknowledgements}

The author would like to thank some friends for insightful discussion and finding some typos.

\clearpage



{
\small
\bibliographystyle{abbrv}

}

\section*{Appendix}
\subsection{Example Grid where Greedy Multi-source BFS Gives Suboptimal Result}
\label{sec_test_kill_multisource_bfs}

Consider the grid depicted in \cref{fig:test_kill_multisource_bfs},
where the marked cells are colored green or blue.
Multi-source BFS algorithm would first greedily find the paths to the blue cells,
then it would not be able to find any additional paths to the green cells.
However, if the $8$ paths marked red are removed to
create paths to the green cells first, more cells would be created.
It can be seen that in this case greedy BFS algorithm finds $f' = 24$ disjoint paths, but
there is no collection of rectangles with total perimeter $4f' = 96$ covering all the marked cells.

\begin{figure}
    \centering
    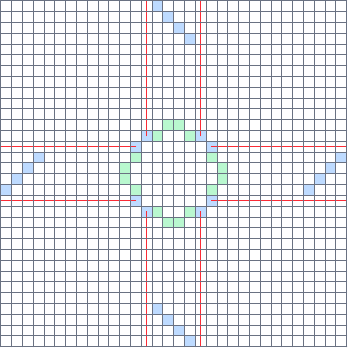
    \caption{Example grid where greedy multi-source BFS algorithm gives $f'$ disjoint paths
    and $4f'$ is less than the minimum total perimeter of bounding rectangles.}
    \label{fig:test_kill_multisource_bfs}
\end{figure}

\end{document}